\documentclass{article}
\PassOptionsToPackage{nonatbib}{neurips_2026}
\usepackage[preprint]{neurips_2026}
\usepackage[utf8]{inputenc}
\usepackage[T1]{fontenc}
\usepackage{microtype}
\usepackage{booktabs}
\usepackage{graphicx}
\usepackage{xcolor}
\usepackage{amsmath,amssymb,amsthm}
\usepackage{enumitem}
\usepackage[backend=biber,style=numeric,sorting=none,maxbibnames=8]{biblatex}
\usepackage[hidelinks,pdfauthor={Dipankar Sarkar},pdftitle={When Privacy Moves ML-Mediated Decisions On Device: Information and Incentive Misalignment in Auctions}]{hyperref}

\newtheorem{theorem}{Theorem}
\newtheorem{proposition}[theorem]{Proposition}

\title{When Privacy Moves ML-Mediated Decisions On Device:\\
Information and Incentive Misalignment in Auctions}
\author{%
  Dipankar Sarkar\\
  Skelf Research\\
  \texttt{dipankar@skelfresearch.com}\\
  \url{https://skelfresearch.com}
}

\begin{document}
\maketitle

\begin{abstract}
Moving ML-mediated decision making onto privacy-preserving clients
decentralises the economic decision along with the inference.  Shared budget
constraints then depend on information that cannot be globally current,
creating an information-structure failure that conventional pacing is not
designed to solve.  We study this information misalignment in an
auction-logic-faithful on-device simulation with 36 campaigns and 50 devices.
Accounting is in dimensionless integer score units; no currency semantics are
claimed.  Across 30 paired demand paths, proportional Even pacing overspends
17.77\% after one tick of staleness and 1,669.31\% after 50 ticks under the
original 20-times budget pressure.  The effect does not depend on that severe a
budget: at two-times pressure, 50-tick overspend remains 106.95\%.  A
visible-budget no-sale guard makes zero-lag compliance exact at this score-unit
granularity, yet leaves 11.88\% overspend at one tick because other devices'
debits remain invisible.  A declared bursty, heterogeneous-device sweep retains
a strictly increasing mean lag curve.  We derive a finite-window expected
excess-debit bound under conditional charge caps and find positive paired slack
in every bounded-value cell.  A second,
incentive misalignment arises when the ML/pacing score transformation is
allowed to change payment units: 98.23\% of rival auctions at one tick admit a
profitable deviation.  An executable implementation-level counterexample
isolates the runner-up's multiplier in the winner's price.  Critical-base-bid
payment is per-auction DSIC conditional on current multipliers, but does not
establish dynamic truthfulness and does not repair base-value ranking
disagreement.
\end{abstract}

\section{Introduction}
\label{sec:introduction}

Moving ML-mediated decision making onto privacy-preserving clients
decentralises the economic decision along with the inference.  A local
classifier or relevance model can determine which offer is eligible and which
bid wins without disclosing its input; shared budget constraints, however,
depend on debits made across all clients.  Those debits cannot be globally
current without communication.  Economic constraints therefore become
distributed information constraints, an information-structure failure that
conventional pacing algorithms, which optimise against the state they observe,
are not designed to solve.  The accounting failure itself is not unique to
machine learning; the ML-specific motivation is that privacy-preserving
inference is what moves economically consequential eligibility and ranking
decisions onto clients while the shared monetary state stays non-local.

This information pattern is concrete rather than hypothetical.
Privacy-preserving advertising systems have explored browser-executed auctions
in isolated worklets, with buyer code querying remotely supplied signals such as
remaining campaign budget; Google's Protected Audience documentation
illustrates this information pattern~\cite{googleprotectedauction}.\footnote{We use this as a
motivating instance of the information pattern, not a claim about any API's
current deployment status; Protected Audience is itself now scheduled for
removal from Chrome~\cite{privacysandboxstatus2026}.}  A remotely supplied signal can be stale by the time a
client acts.  In our model a central service
broadcasts a pacing multiplier and spend snapshot every \(\Delta\) ticks;
between broadcasts, each of \(N\) clients combines that snapshot with only its
own recent debits.  Pacing can smooth demand against the snapshot, but cannot
repair information that has not arrived.

We study two economic failures in one auction-logic-faithful setting.
\emph{Information misalignment} occurs when local AI-mediated decisions act
against mutually inconsistent accounting state, so individually admissible
debits violate a shared budget.  \emph{Incentive misalignment} occurs when the
relevance/pacing transformation used to rank offers is also allowed to change
the units of payment: the auction charges the runner-up's scaled score rather
than the winner's critical base bid.  The first is a property of the
information structure; the second is a payment-rule defect.  Neither is cured
by a more sophisticated pacing controller.

\paragraph{Contributions.}
\begin{enumerate}[leftmargin=*,label=\arabic*.,itemsep=0.2em,topsep=0.3em]
  \item \textbf{A formal model of accounting information misalignment.}  We
  model budget state replicated across autonomous clients that observe their own
  debits but not one another's, derive a finite-window expected excess-debit
  bound from lag, arrivals, crossing campaigns, and conditional payment caps,
  and evaluate it with paired confidence intervals.
  \item \textbf{Pressure and enforceability evidence.}  A new
  \(4\times7\) pressure--lag sweep shows that long-lag excess debit persists at
  two-times as well as 20-times pressure.  A visible-balance no-sale guard
  eliminates the preventable zero-lag crossing debit but not stale
  cross-device spending.
  \item \textbf{A separate incentive-misalignment result.}  We characterise a
  canonical dominant report on rival auctions and give an executable
  implementation-level counterexample.  Critical-base-bid payment makes the
  per-auction rule DSIC conditional on current multipliers; it neither proves
  dynamic truthfulness nor changes paced ranking.
  \item \textbf{An auction-logic-faithful economic audit.}  Thirty paired
  paths use frozen budgets, disjoint calibration seeds, deposited replicate
  rows, and a live TypeScript parity test.  The audit also shows that
  score-space pacing discounts price by about 11.5 times and that PI control
  helps only at the two shortest deterministic lags.
\end{enumerate}

\section{Prior work}
\label{sec:prior-work}

\paragraph{Industrial budget pacing.}
Production pacing systems forecast supply, adapt bids, or throttle
participation to smooth campaign delivery~\cite{agarwal2014linkedin,
lee2013smooth,xu2015smartpacing,balseiro2024fieldguide}.  Agarwal et al.\
already note that delayed spend information can make fast-spending campaigns
exceed budget~\cite{agarwal2014linkedin}; staleness-causes-overdelivery is not
our discovery.  Our narrower novelty is to formalise the case in which the
  budget state is replicated across autonomous devices that see their own
debits but not one another's, derive a finite-window bound, and quantify the
  failure under the product's scoring and pacing logic.  Better feedback control
cannot undo debits that remain invisible until a sync.

\paragraph{Repeated auctions with budgets.}
Repeated-auction models analyse approximation, bidder learning, equilibrium,
regret, and efficiency under intertemporal budgets
~\cite{balseiro2015repeated,balseiro2019learning,gaitonde2023budgetpacing}.
These works explain strategic and dynamic consequences of budget constraints,
but typically give the auctioneer or learning process a coherent history of
allocations and spend.  Our devices instead make simultaneous local decisions
from inconsistent histories.  We therefore do not offer a new equilibrium or
regret result; we quantify an accounting error that those central histories
exclude.

\paragraph{Online budgeted allocation.}
The AdWords problem and later dual approaches formalise online allocation with
budgets and obtain competitive or regret guarantees
~\cite{mehta2007adwords,balseiro2023dualmirror}.  A central allocator in these
models observes the remaining resource before accepting the next item.  That
resource-feasibility assumption fails between device syncs: several devices
can accept against the same remaining score-unit budget.  Our expected-debit theorem can be
read as a finite-horizon cost of relaxing coherent resource visibility, not as
a replacement allocation algorithm.

\paragraph{Delayed-feedback online learning.}
Delayed-feedback learning relates observation delay to regret and constructs
reductions that wrap non-delayed algorithms~\cite{joulani2013delayed}.  Our lag
is different in kind: it delays a shared hard-state update, while purchases
made during the delay remain financially binding.  Consequently, the relevant
loss is excess debit after a stopping boundary, and its scale depends on
arrival intensity and payment support as well as delay.

\paragraph{Distributed quota and escrow.}
Escrow transactions partition a finite quantity among sites so each can update
locally without violating the global invariant~\cite{oneil1986escrow}.  Such
reservations are a natural alternative to the soft replicated budget state studied
here: they can make overspend impossible, but may strand budget at devices with
little matching demand.  Our analysis measures the failure of the non-escrow
design and makes that utilisation--consistency trade-off explicit.

\paragraph{Multiplicative pacing equilibria and auction foundations.}
Multiplicative pacing equilibria study autobidders whose multipliers enter
allocation and strategic optimisation~\cite{conitzer2022autobidding}; Vickrey
and Myerson give the classical relationship between monotone allocation,
critical payments, and truthful reporting
~\cite{vickrey1961counterspeculation,myerson1981optimal}.  Our negative result
is narrower and implementation-specific: the deployed rule also places the
runner-up's multiplier inside the winner's payment.  Applying a critical base
bid is a standard repair, not a new mechanism.  Our contribution is to expose
and measure the mismatch in the on-device rule while separating it from the
accounting-lag problem.

\section{Model and mechanism}
\label{sec:model}

There are campaigns \(c\in\mathcal C\), each with global budget \(B_c\), and
\(N\) devices acting during a finite window \([0,T]\).  Vertical-\(v\)
opportunities follow a marked Poisson process with intensity \(A_v\).  At a
broadcast, every device receives campaign \(c\)'s central debit snapshot and a
pacing rate \(r_c\in[0,1]\).  Until the next broadcast, device \(i\) uses that
snapshot plus only the debits it generated locally.  Sync lag is \(\Delta\)
ticks; \(\Delta=0\) denotes continuously shared accounting state, whereas \(\Delta=1\)
still permits a full tick of cross-device staleness.

For an admissible campaign, let \(b_c\) be its reported base bid and let
\(\rho_c\) combine relevance and quality; in the evaluated SDK
configuration \(0\le\rho_c\le0.8\).  The deployed auction ranks
\begin{equation}
  s_c=b_c\rho_cr_c.                                  \label{eq:score}
\end{equation}
With at least two admissible bidders, the TypeScript auction returns the
second-highest score \(\sigma\) plus one.  With one bidder it returns its
numeric CPM floor, 100, plus one; that floor is an output fallback, not a
no-sale reserve.  Although the product output is CPM-denominated, it does
not implement CPM-to-per-impression arithmetic.  We preserve the numeric rule
for parity, then define each returned number as a dimensionless score-unit
debit and round once to the nearest integer (ties away from zero).  Thus the
budget experiment is a synthetic accounting convention, not a reproduction of
product billing, and it makes no currency claim.

For allocation diagnostics, we record whether the winner differs from the
highest unscaled base value $v_c$ among admissible bidders.  We call this
\emph{base-value ranking disagreement}, rather than an efficiency or welfare
measure: because
\(\rho_c\) represents relevance and quality, a lower-base-value but
higher-relevance winner need not be socially inefficient.  We do not define or
estimate social welfare.

We compare three controllers.  As-soon-as-possible (ASAP) sets \(r_c=1\) until
exhaustion.  Even is a proportional controller with gain \(k_p=0.5\), targeting
the linear trajectory \(B_ct/T\).  PI adds an integral term with \(k_i=0.1\),
clamped to \([-1,1]\) against windup.  The reference implementation calls the
latter ``PID'', but it has no derivative term, so we use the accurate name PI.

Let realised campaign debit be \(x_c\).  Our primary estimand is pool
overspend
\begin{equation}
  \operatorname{OvSpd}=
  \frac{\sum_c(x_c-B_c)_+}{\sum_cB_c}.               \label{eq:overspend}
\end{equation}
This counts excess debit without cancelling it against underspend elsewhere.
For delivered value, let \(U(\Delta)\) be value delivered before each winning
campaign crosses its true global budget and \(D(\Delta)=\sum_c(x_c-B_c)_+\) the
excess debit; we report these separately, and where a single figure is
convenient, the normalised score
\(Q(\Delta)=[U(\Delta)-D(\Delta)]/[U(0)-D(0)]\) under a declared one-for-one
excess-debit penalty.  \(Q\) is not welfare or utility: it omits transfers,
publisher and platform value, displaced bidders, and opportunity cost.

\section{Theory}
\label{sec:theory}

\subsection{Expected overspend with stale accounting state}

Let \(Y\) be an accounting debit and \(\mathcal F_{u^-}\) the history just before an
arrival at time \(u\).  For each vertical, assume the conditional support cap
\begin{equation}
 \operatorname*{ess\,sup}
 (Y\mid\mathcal F_{u^-},v,\text{sale})\le\bar p_v.   \label{eq:payment-cap}
\end{equation}
This premise holds pointwise for bounded bids under the score-payment rule.  It
does not hold for an untruncated log-normal value distribution or for a
critical bid divided by a vanishing pacing multiplier.

\begin{theorem}[Finite-window lagged-sync overspend]
Assume the budget vector is fixed deterministically before arrivals.  Let
\(\tau_c\) be campaign \(c\)'s first budget-crossing time,
\(K_v=\sum_{c\in v}\mathbf1\{\tau_c\le T\}\), and
\(\tau_v=\min_{c\in v}\tau_c\), with \(\tau_v=T\) when \(K_v=0\).  Then, for
\(D=\sum_c(x_c-B_c)_+\) and \(B=\sum_cB_c\),
\begin{equation}
 \frac{\mathbb E[D]}{B}\le \frac1B\sum_v\bar p_v
 \mathbb E\!\left[K_v+A_v\min\{K_v\Delta,(T-\tau_v)_+\}\right].
 \label{eq:main-bound}
\end{equation}
\end{theorem}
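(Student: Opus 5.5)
The plan is to decompose the excess debit campaign by campaign and then vertical by vertical.

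\emph{Step 1: the crossing sale.} If \(\tau_c>T\), then \((x_c-B_c)_+=0\). Otherwise write
\[
x_c-B_c=(x_c(\tau_c)-B_c)+(x_c-x_c(\tau_c)),
\]
where \(x_c(\tau_c)\) is the debit including the crossing sale. Just before that sale the debit was below \(B_c\), so the first term is at most the crossing sale's own debit. By the cap~\eqref{eq:payment-cap}, its conditional expectation given \(\mathcal F_{\tau_c^-}\) is at most \(\bar p_v\). Summing over the crossing campaigns of vertical \(v\) yields the \(\bar p_vK_v\) term.

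\emph{Step 2: sales after crossing.} Every remaining excess debit comes from a sale of \(c\) at some time \(u\in(\tau_c,T]\). I would first argue that such a sale requires stale information. Once the crossing has been broadcast, every device sees the snapshot above \(B_c\) and stops selling for \(c\); if the crossing device itself is local, it sees its own debit immediately. So post-crossing sales can occur only in \((\tau_c,\tau_c+\Delta]\cap[0,T]\).

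Within vertical \(v\), these post-crossing sales fall in the union of the windows \((\tau_c,\tau_c+\Delta]\) over crossing campaigns \(c\in v\). That union is contained in \((\tau_v,T]\), and its Lebesgue measure is at most \(\min\{K_v\Delta,(T-\tau_v)_+\}\). Each vertical-\(v\) arrival produces at most one sale, so post-crossing debit in \(v\) is at most the sum, over arrivals falling in that random set \(S_v\), of the debit \(Y\).

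\emph{Step 3: expectation.} I would write this sum as a stochastic integral \(\int \mathbf1\{u\in S_v\}Y_u\,N_v(du)\). The indicator is predictable, since membership of \(u\) in \(S_v\) depends only on crossings strictly before \(u\). Apply the compensator of the marked Poisson process, which has intensity \(A_v\), together with the cap~\eqref{eq:payment-cap} on the conditional mark expectation. This gives
\[
\mathbb E\bigl[\textstyle\int\cdots\bigr]\le \bar p_v A_v\,\mathbb E[|S_v|]\le \bar p_v A_v\,\mathbb E\bigl[\min\{K_v\Delta,(T-\tau_v)_+\}\bigr].
\]
The deterministic-budget assumption is what makes \(\tau_c\) a stopping time of the arrival filtration, and it lets us divide by a nonrandom \(B\). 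Summing over \(v\) and dividing by \(B\) gives~\eqref{eq:main-bound}.

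\emph{Main obstacle.} The hardest step is Step 3, the predictability and compensator argument. The sets \(S_v\) and the counts \(K_v\) are random and depend on the arrivals themselves, so a naive product \(A_v\times\) length is not valid. I will handle this through the Poisson martingale identity (Campbell/Lévy system), applied with \(\mathcal F_{u^-}\)-measurable integrands and bounding with \(\mathbf 1_{S_v}\) and the conditional cap.

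A secondary check is the claim in Step 2 that no sale for \(c\) occurs after \(\tau_c+\Delta\). This requires that the broadcast occurring within \(\Delta\) of the crossing reflects the crossing debit, which I will take from the sync model.
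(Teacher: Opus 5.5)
Your proposal is correct and follows essentially the same route as the paper's proof. Both arguments bound one crossing debit per exhausted campaign by the cap, which gives the \(\bar p_vK_v\) term. Both confine post-crossing sales to stale windows of length at most \(\Delta\), whose union has measure at most \(\min\{K_v\Delta,(T-\tau_v)_+\}\). Both then apply the Poisson compensator to the predictable indicator of that union, together with the charge cap. Your explicit stochastic-integral formulation, and your remark that deterministic budgets make each \(\tau_c\) a stopping time and \(B\) nonrandom, spell out details the paper leaves implicit.
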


\emph{Proof.}
Each exhausted campaign contributes at most one crossing debit.  After a
crossing, its stale availability can persist only to the next sync, an interval
of length at most \(\Delta\).  If \(I_v\) is the union of these adaptive
post-crossing intervals in vertical \(v\), then
\(|I_v|\le\min\{K_v\Delta,(T-\tau_v)_+\}\).  The indicator of \(I_v\) is
predictable, so the Poisson compensator gives \(A_v\mathbb E|I_v|\) expected
vertical arrivals in the union.  Applying Eq.~\eqref{eq:payment-cap} to the
crossing and subsequent debits, and summing across verticals, proves
Eq.~\eqref{eq:main-bound}.  The argument counts every arrival in each union and
does not assume that post-exhaustion winners are independent and identically
distributed.

The bound deliberately ignores devices' local self-knowledge and is therefore
device-count agnostic: $N$ affects realised crossings through the assignment
process, but does not appear separately from arrival intensity, lag, crossing
campaigns, and payment caps.  An $N$-dependent bound requires an explicit
arrival-to-device assignment model and is left to future work.

\subsection{Incentives under score-space payment}

The implementation stably sorts admissible campaigns by descending score, so
input order breaks an exact tie.  On an auction with a rival, fix the highest
rival score \(\sigma\) and write \(m_c=\rho_cr_c\).  The incentive
analysis concerns the unrounded auction output; integer rounding is applied
only when posting a synthetic accounting debit, and the harness tracks that
residual separately, so it does not enter the deviation test.  Bidder \(c\) wins when
\(b_cm_c>\sigma\), loses when \(b_cm_c<\sigma\), and a winner pays
\(\sigma+1\).

\begin{proposition}[The deployed rule is not truthful]
On rival auctions with non-negative reports and \(m_c>0\), a canonical
dominant report is
\begin{equation}
  b_c^*=\max\{0,(v_c-1)/m_c\}.                       \label{eq:dominant-report}
\end{equation}
For \(v_c>1\), truthful reporting under-bids exactly when
\(v_c(1-m_c)>1\) and over-bids when the inequality reverses.  If \(m_c=0\), no
report changes the zero score or allocation.
\end{proposition}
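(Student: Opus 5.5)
The plan is to reduce the per-auction game, holding \(\sigma\) and \(m_c>0\) fixed, to a single-threshold decision and then compare reports. Bidder \(c\) with value \(v_c\) has utility \(v_c-(\sigma+1)\) if \(b_cm_c>\sigma\) and \(0\) if \(b_cm_c<\sigma\). At the boundary \(b_cm_c=\sigma\) the outcome depends on the stable sort, which we leave as an adversarial tie. The key observation is that the payment \(\sigma+1\) does not depend on \(b_c\). The report therefore only selects which side of the threshold \(\sigma/m_c\) the bidder lands on. Winning is weakly preferred exactly when \(v_c\ge\sigma+1\), that is, when \(\sigma\le v_c-1\).

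\textbf{Step 1: \(b_c^*\) is dominant.}
\begin{itemize}
\item If \(v_c\le1\), then \(b_c^*=0\) gives score \(0\). Since \(\sigma\ge0\) under non-negative reports, the bidder either loses, receiving \(0\), or wins at a tie, receiving \(v_c-\sigma-1\le0\).
\item For \(v_c>1\), the report \(b_c^*=(v_c-1)/m_c\) yields score \(b_c^*m_c=v_c-1\).
\begin{itemize}
\item If \(\sigma<v_c-1\), the bidder wins with utility \(v_c-\sigma-1>0\), which is the best any report can achieve.
\item If \(\sigma>v_c-1\), the bidder loses with utility \(0\). Any winning report would give \(v_c-\sigma-1<0\).
\item If \(\sigma=v_c-1\), any winning report gives \(v_c-\sigma-1=0\), so the bidder is indifferent regardless of how the tie resolves.
\end{itemize}
\end{itemize}
Hence \(b_c^*\) attains the pointwise maximum over \(\sigma\), which is what dominance requires. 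I would call it ``canonical'' because any report whose score lies in the indifference region is also optimal. Only the threshold score \(v_c-1\) matters, so dominance holds but uniqueness does not.

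\textbf{Step 2: the under/over-bid characterisation.} Truthful reporting \(b_c=v_c\) produces score \(v_cm_c\), whereas \(b_c^*\) produces score \(v_c-1\). Truthful reporting under-bids exactly when \(v_cm_c<v_c-1\), which rearranges to \(v_c(1-m_c)>1\). It over-bids when the inequality reverses. In either strict case there is an open interval of \(\sigma\) strictly between the two scores. On that interval truthful reporting is strictly worse: it loses a profitable auction in the first case, or wins an unprofitable one in the second. This shows that the deployed rule is not truthful. When \(v_c(1-m_c)=1\), the two reports coincide and truthful reporting is optimal.

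\textbf{Step 3: the case \(m_c=0\).} Here the score is \(b_c\cdot0=0\) for every report, so neither the ranking nor the allocation depends on \(b_c\).

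The main obstacle is handling the exact tie \(b_cm_c=\sigma\) correctly. Stable sorting resolves it by input order rather than by the report, so I would argue dominance robustly over both tie resolutions. The relevant point is that the tie occurs precisely at \(\sigma=v_c-1\), where the bidder is indifferent between winning and losing. A secondary check is that the restriction to non-negative reports is what forces the \(\max\{0,\cdot\}\) when \(v_c\le1\).
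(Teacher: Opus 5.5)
Your proof is correct and uses the same threshold argument as the paper: the winner's payment \(\sigma+1\) does not depend on the winning report, so the report should lift the score \(b_cm_c\) above \(\sigma\) exactly when \(v_c-\sigma-1\ge 0\). You also make the tie cases explicit and replace the paper's numerical counterexample (\(\rho_c=0.8\), \(r_c=0.5\), rival bid 1000, \(v_c=800\)) with the general open interval of \(\sigma\) between \(v_cm_c\) and \(v_c-1\). One aside is wrong: when a rival has a positive multiplier, so that \(\sigma\) ranges over \([0,\infty)\), your own Step~2 interval argument shows that every report other than \(b_c^*\) does strictly worse against some \(\sigma\), making \(b_c^*\) the unique dominant report, so ``canonical'' should not be justified by the non-uniqueness of best responses at a single fixed \(\sigma\).
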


The proof is the standard threshold argument: winning yields utility
\(v_c-(\sigma+1)\), independent of \(c\)'s winning report, so the report should
cross the score threshold exactly when that utility is non-negative.  For a
concrete counterexample, take \(\rho_c=0.8\), \(r_c=0.5\), a rival bid of 1000,
and \(v_c=800\).  Truthful reporting scores 320 and loses to score 400;
reporting 1100 scores 440, pays 401, and obtains utility 399.  A unit test runs
this vector through the auction port.  The result does not cover the one-bidder
branch, where every report produces the same 101-unit sale.

If paced ranking is retained but the tentative winner instead pays
\begin{equation}
 p_c=\max\{f,\sigma/(\rho_cr_c)\},                   \label{eq:critical-price}
\end{equation}
with no sale below base-bid reserve \(f\) and no sale when \(\rho_cr_c=0\), the
payment is the critical base value.  Truthful bidding is then weakly dominant
by Myerson's lemma~\cite{myerson1981optimal}: the \emph{per-auction}
allocation/payment rule is DSIC conditional on the current relevance and
pacing multipliers.  This changes payment, not the weighted allocation rule in
Eq.~\eqref{eq:score}, and does not establish dynamic truthfulness in the
repeated process, where bids affect spend and spend affects later pacing rates.

\section{Experimental setup}
\label{sec:setup}

\paragraph{Design and product parity.}
The primary harness runs \(N=50\) homogeneous devices for \(T=100\) ticks, with 3,000
pool-wide Poisson arrivals per tick and
\(\Delta\in\{0,1,2,5,10,25,50\}\).  A 100-tick window is a nominal day, so one
tick is 14.4 minutes; we make no sub-second claim.  At each tick, each device
independently draws a Poisson number of arrivals with mean (3000/50); each
such arrival independently samples one context uniformly.  Thus opportunities
are generated by independent device streams, not assigned round-robin from a
fixed aggregate queue.  Each executes a Rust port of the TypeScript auction.
The parity test runs the live TypeScript source, including a fractional
27.9064-score vector so integer golden files cannot hide drift.  Pacing rates come from the
reference controller crate.

\paragraph{Contexts and label provenance.}
We sample 50,808 conversation-derived context records derived from
WildChat~\cite{zhao2024wildchat} (original dataset: ODC-BY~\cite{wildchatcard}); the derived
context-vector artifact released with this work is licensed CC~BY-NC~4.0.  All 50,808 were
labelled by one Apache-2.0 \texttt{gpt-oss:120b} teacher
~\cite{openai2025gptoss}; 0 of 50,808 are gold or human labels.  The deposited
200-row human-evaluation stimulus file contains no
rating columns or completed ratings.  Thus topic, intent, safety, and
ad-eligibility accuracy are unaudited.  Under the teacher labels, 5,245 records
(10.32\%) admit at least one campaign; the remainder are genuine no-fill draws
for this pool.

\paragraph{Fresh-session frequency-cap scope.}
Every sampled opportunity is defined as a fresh session.  The SDK's
three-impressions-per-campaign-per-session frequency cap therefore starts from
an empty map and never carries state between draws.  We make this explicit
choice because the deposited contexts contain neither session identifiers nor
session boundaries; imposing synthetic boundaries would add an unidentifiable
process.  The experiment has auction parity under fresh sessions, not parity
for repeated within-session exposures, and its device-count sensitivity does
not measure frequency-cap interactions.

\paragraph{Campaigns and value construction.}
The pool crosses 12 verticals with three bidders per vertical.  Premium, mid,
and value tiers bid respectively 1.00, 0.75, and 0.55 times a vertical catalogue
bid; after integer rounding, the 36 catalogue bids have mean 152.1667 score
units.  On
each opportunity the harness first draws an unscaled base value \(Z\), then
constructs bidder \(c\)'s auction value as
\begin{equation}
 v_c=\operatorname{round}\!\left(
 Z\frac{\text{catalogue bid}_c}{152.1667}\right).    \label{eq:value-construction}
\end{equation}
The deterministic arm fixes \(Z=100\) score units.  The proxy arm draws
\(Z\sim\operatorname{LogNormal}(\mu,0.85)\), with
\(\mu=\ln(68.89)-0.85^2/2\).

Zhang et al.~\cite{zhang2014rtbbenchmark} report campaign 1458's mean market
price as 68.89 Chinese fen per cost per mille (CPM; one thousand impressions),
where one fen is one hundredth of an RMB.  Market price is not advertiser
private value.  We use the published number only as a dimensionless numerical
anchor, mapping one fen-per-CPM numerical unit to one simulator score unit, so
68.89 is the mean of the \emph{unscaled} base draw \(Z\); this is neither a
currency conversion nor a claim about per-impression value.  Equation~\eqref{eq:value-construction}
then makes the actual eligible-bidder mixture's pre-rounding mean 76.3105
score units, verified from the observed vertical mix and catalogue tiers.  The
log-normal shape parameter is chosen, not estimated, and the arm is a
calibrated-synthetic sensitivity rather than an iPinYou replay.

\paragraph{Budgets and statistical protocol.}
Calibration and evaluation use disjoint demand paths.  For each primary
payment mechanism and value arm, we average unconstrained vertical spend over
pilot seeds 30,260,527--30,260,556, split it among the three vertical bidders,
divide by the 20-times pressure target, and freeze the resulting vector before
drawing evaluation seeds 20,260,527--20,260,556.  No primary budget floor is
used.  The deterministic and proxy score-rule budgets total 135,438 and 76,959
score units.  Every primary cell has \(n=30\) paired demand realisations.  Marginal
intervals are two-sided 95\% Student-\(t\) confidence intervals (CIs) on cell
means; contrasts use seed-paired differences.  Raw rows, all pilot calibration
rows, frozen vectors, and source/input hashes are deposited.

For the pressure sensitivity, we reuse those same frozen pilot means and change
only the divisor to $p\in\{2,5,10,20\}$, producing one ex-ante budget vector
per $p$ before evaluating the same 30 held-out seeds at every lag.

\paragraph{Bursty/heterogeneous robustness protocol.}
A separate declared process uses one shared two-state Markov multiplier per
tick: low intensity 0.25, high intensity 3.0625, transition probabilities 0.08
(low to high) and 0.22 (high to low), and a stationary mean of one.  Expected
state durations are 12.5 and 4.55 ticks.  Ten devices have arrival weight 2.5
and 40 have weight 0.625, also averaging one; conditional counts are independent
Poisson draws.  We keep the same frozen pressure-20 budget rather than
recalibrating and evaluate the same 30 held-out seeds at every lag.

\paragraph{Label-perturbation protocol.}
A fixed 1,558-label perturbation stress test rebuilds eligibility, recalibrates
on the disjoint pilot seeds, and reruns the Even curve; full construction and
results are in Appendix~\ref{app:labels}.

\section{Results}
\label{sec:results}

\subsection{Overspend rises sharply with sync lag}

Table~\ref{tab:overspend} gives the primary curves.  In the deterministic arm,
Even overspend rises from about 0.5\% with shared accounting state to about 18\% at one
tick and more than 16 times the aggregate budget at 50 ticks.  The
calibrated-synthetic landscape changes the curve's shape but not its direction,
reaching about nine times budget at 50 ticks.

\begin{table}[t]
\centering\scriptsize
\setlength{\tabcolsep}{3.6pt}
\caption{Pool overspend (\%), mean $\pm$ two-sided 95\% Student-$t$ CI
half-width.  Every cell has $n=30$ evaluation seeds
20,260,527--20,260,556, $N=50$ devices, 3,000 arrivals/tick, 100 ticks, a
mechanism-specific budget frozen from 30 disjoint pilot seeds, and fresh-session
frequency-cap state.  Even is proportional; PI is proportional-integral.}
\label{tab:overspend}
\begin{tabular}{rcccc}
\toprule
&\multicolumn{2}{c}{deterministic}&\multicolumn{2}{c}{calibrated-synthetic}\\
\cmidrule(lr){2-3}\cmidrule(lr){4-5}
$\Delta$&Even&PI&Even&PI\\
\midrule
0&$0.5276\pm0.0347$&$0.2800\pm0.0294$&$0.9983\pm0.0487$&$0.8148\pm0.0407$\\
1&$17.7731\pm0.8702$&$7.4418\pm0.7871$&$31.8700\pm1.2762$&$33.1133\pm1.0317$\\
2&$50.4252\pm1.4789$&$47.1974\pm1.2682$&$72.1369\pm2.1281$&$70.4868\pm2.2735$\\
5&$189.9383\pm1.8070$&$189.9383\pm1.8070$&$109.4431\pm12.0941$&$108.6332\pm12.6605$\\
10&$479.2772\pm2.5343$&$479.2772\pm2.5343$&$124.0767\pm4.6931$&$124.0767\pm4.6931$\\
25&$1346.1324\pm3.7170$&$1346.1324\pm3.7170$&$402.7920\pm2.4103$&$402.7920\pm2.4103$\\
50&$1669.3075\pm3.8743$&$1669.3075\pm3.8743$&$905.0317\pm4.0435$&$905.0317\pm4.0435$\\
\bottomrule
\end{tabular}
\end{table}

Device count and offered load both matter.  At \(\Delta=1\), fixed aggregate
load, and the same frozen deterministic budget, Even overspend is about 13\%,
18\%, and 20\% for \(N=10,50,200\).  Holding \(N=50\) and the budget fixed while
halving or doubling arrival load yields approximately 0\%, 18\%, and 47\%.
These are one-factor sensitivities, not a scale law.

\subsection{The lag effect persists under milder budget pressure}

Figure~\ref{fig:pressure} varies only the denominator used to freeze budgets.
The 20-times curve exactly reproduces all 210 corresponding primary raw rows.
As expected, pressure controls magnitude, and controller feedback makes the
short-lag curves mildly nonmonotone at two- and five-times pressure.  The
long-lag information failure nevertheless remains: at \(\Delta=50\), mean
overspend is 106.95\%, 367.37\%, 801.41\%, and 1,669.31\% for pressure 2, 5,
10, and 20, respectively.

\begin{figure}[t]
  \centering
  \includegraphics[width=0.93\linewidth]{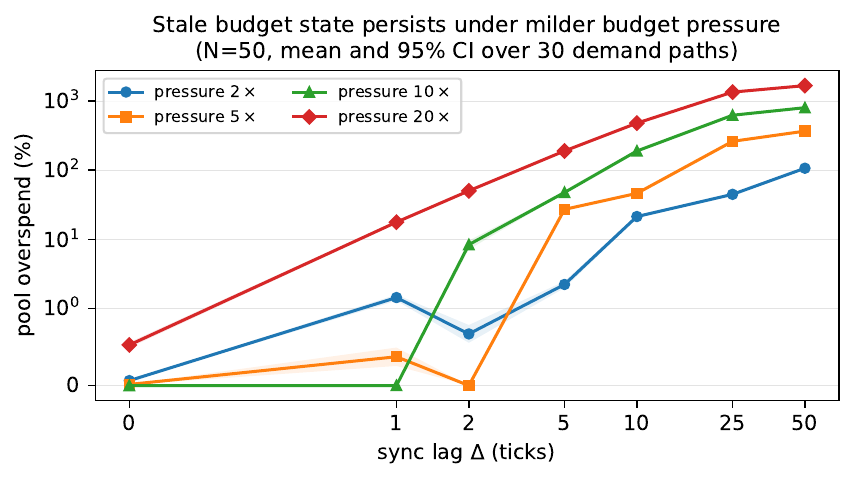}
  \caption{Budget-pressure sensitivity for deterministic Even pacing.  Each
  curve uses a budget vector frozen from the same 30 disjoint pilot paths after
  dividing mean unconstrained spend by the labelled pressure; points and bands
  are means and two-sided 95\% Student-(t) CIs over the same 30 evaluation
  seeds.  Milder budgets reduce magnitude but do not remove long-lag excess
  debit.}
  \label{fig:pressure}
\end{figure}

\subsection{The lag effect survives bursty, heterogeneous demand}

Under the declared process, mean overspend at
\(\Delta=0,1,2,5,10,25,50\) is respectively 0.4861\%, 39.5224\%, 118.8324\%,
241.8145\%, 381.9290\%, 1,098.1294\%, and 1,573.1751\%.  Thus the mean curve is
strictly increasing; every paired lag-minus-zero 95\% CI is positive, and 24
of 30 individual paths are themselves strictly increasing across all seven
cells.  This sweep supports the direction of the lag effect under one specified
burst/device mixture, not its deployment magnitude or robustness to arbitrary
demand processes.

\subsection{Visible-balance rejection removes only local crossing overshoot}

The guarded ablation rejects a tentative sale when its rounded score-unit debit
exceeds the winning campaign's remaining device-visible budget.  At zero
lag it produces exactly 0 excess debit in all 30 replicates, eliminating the
primary arm's $0.5276\pm0.0347$\% crossing overshoot.  At one tick, however,
it still overspends $11.8822\pm1.1969$\%: each device respects what it can see
while other devices' unsynchronised debits remain absent.  Guard and standard
means are equal only at \(\Delta=5\) (both 189.9383\%) and \(\Delta=10\) (both
479.2772\%); at 25 ticks they are 1,345.6427\% versus 1,346.1324\%, and at 50
ticks 1,668.6929\% versus 1,669.3075\%.  The guard and zero-lag crossing result
are artefacts of the chosen integer score granularity and are not claimed to
transfer to a real currency ledger.

\subsection{The expected bound contains the measured curve}

The 12 corpus-derived vertical intensities sum to about 309.7 servable arrivals
per tick.  The charge cap $\bar p_v$ in Eq.~\eqref{eq:main-bound} is an
\emph{ex-ante} bound on the conditional support of a sale's charge, not a
realised path maximum: on the bounded deterministic arm
$\bar p_v=\max(101,\,\mathrm{round}(0.8v_{(2)}+1))$ is fixed before any run by
the vertical's second-highest admissible value $v_{(2)}$, the SDK quality
ceiling $0.8$ and the one-unit increment, with the single-bidder case falling
back to the floor plus one.  These bounds are $101$--$119$ score units;  Table~\ref{tab:bound}
evaluates Eq.~\eqref{eq:main-bound} with paired estimators.  Every slack CI is
positive.  This is consistency of estimated expectations, not a claim that the
bound must hold on each finite replicate.  The untruncated proxy arm is outside
the theorem's support premise.

\begin{table}[t]
\centering\small
\caption{Expected-bound evaluation for deterministic Even pacing.  Empirical
and bound columns are means over the same $n=30$ evaluation seeds; ``paired
slack'' is bound minus empirical overspend in percentage points with a
two-sided 95\% Student-$t$ CI.  Budgets are frozen from 30 disjoint pilot seeds;
$N=50$, 3,000 arrivals/tick, 100 ticks, and fresh sessions in every row.}
\label{tab:bound}
\begin{tabular}{rrrr}
\toprule
$\Delta$&empirical (\%)&bound (\%)&paired slack [95\% CI]\\
\midrule
1&17.7731&70.4754&52.7024 [51.3435,54.0613]\\
2&50.4252&145.6363&95.2111 [93.7322,96.6900]\\
5&189.9383&359.9642&170.0258 [168.2188,171.8328]\\
10&479.2772&717.1772&237.9001 [235.3658,240.4344]\\
25&1346.1324&1788.8165&442.6841 [438.9671,446.4011]\\
50&1669.3075&2357.8528&688.5453 [684.7028,692.3878]\\
\bottomrule
\end{tabular}
\end{table}

\subsection{Payment incentives and ranking remain distinct}

At deterministic \(\Delta=1\), 98.23\% of rival auctions admit a profitable
deviation under the score-space rule; including the outcome-invariant
one-bidder branch, the rate is 96.97\% of all sales.  The highest-value
admissible bidder differs from the winner in 51.63\% of sales.  Under
critical-base-bid payment with a 100-unit no-sale reserve, the harness finds no
profitable per-auction deviation in any of 420 replicate cells, conditional on
the realised multipliers.  Paced ranking remains: zero-lag deterministic
base-value ranking disagreement is 49.26\%.  This is not labelled an
inefficiency because relevance may itself be part of the allocation objective.

The payment change also affects delivery.  At the deliberately smaller
20,000-opportunity factorial scale, we cross score versus critical payment,
0 versus 100-unit base-bid reserve, and an unclamped versus 500-unit-floored
budget vector.  The unclamped and floored aggregate budgets are respectively
9,000 and 20,265 score units and remain identical across payment, reserve, and seeds.
Without a reserve and with the unclamped vector, critical payment adds about
11.45 underspend points (95\% CI [9.91,12.99]).  For critical payment, adding
the reserve adds about 33.56 points [31.53,35.58]; conditional on that reserve,
the floor adds about 12.32 points [11.52,13.13].  Reserve and floor effects are
therefore larger than the isolated payment effect.

\subsection{PI has a narrow advantage}

Even and PI share each evaluation seed, so the controller estimand is the
replicate-level difference PI minus Even.  It is detectably negative only at
deterministic \(\Delta=1\), where the effect is about \(-10.33\) points (95\%
CI [\(-11.46\),\(-9.21\)]), and \(\Delta=2\), where it is about \(-3.23\)
[\(-4.12\),\(-2.34\)].  All calibrated-synthetic effects are inconclusive.
The two controllers are exactly equal from deterministic \(\Delta=5\) and
proxy \(\Delta=10\) onward because their paced states converge to the same
boundary behaviour.  The evidence supports a short-lag controller advantage,
not a general PI remedy for stale accounting.

\subsection{Label perturbation leaves the lag result intact}

A 1,558-label perturbation stress test leaves the monotone lag effect
unchanged: every paired lag-minus-zero CI remains positive and all 30
individual perturbed curves are strictly increasing.  Construction and cell
results appear in Appendix~\ref{app:labels}; this is a stress test, not a human
audit.

\subsection{Pacing discounts price rather than participation}

At deterministic zero lag, Even wins 19,358.1 impressions at an average numeric
charge of 7.07 score units, while ASAP wins 1,679.0 at 81.35.  Thus pacing multiplies delivery
by about 11.5 and divides price by the same factor.  The paced pool buys about 62.5\%
of expected servable inventory.  This behaviour follows from placing the
pacing rate in the runner-up score that determines payment: lower multipliers
lower prices even when the auction still selects a winner.  It is qualitatively
different from probabilistic participation throttling.

Reporting authorised delivered value separately from excess debit matters: the
tempting affine shortcut \(Q=1-\operatorname{OvSpd}\) holds in only 67 of the
1,800 raw primary and ablation rows, all zero-overspend coincidences from the
explicitly labelled probabilistic-throttling ablation, and in none of the 1,380
rows that execute the auction mechanism.  The delivered-value quantity is
therefore not a rescaling of overspend.

\section{Analysis and limitations}
\label{sec:analysis}

\paragraph{The economic bottleneck is consistency, not controller tuning.}
The steep lag curve and narrow PI improvement point to the same diagnosis.  A
controller acts on the state it can see; after a campaign crosses budget,
dozens of devices can continue authorising score-unit debits until the state converges.
Faster sync reduces the contaminated interval, while hard device reservations
would remove excess debit at the cost of stranded allocations.  Choosing
between broadcast frequency and escrow-like reservation is therefore an
economic trade-off between communication, overspend, and utilisation.

\paragraph{Payment and ranking require separate analysis.}
Conditional on current multipliers, critical-base-bid payment fixes the
per-auction incentive defect of charging in score space, but it neither makes
the accounting state coherent nor removes paced ranking.
Conversely, perfect state sync would stop most excess debit while leaving the
score-payment deviation.  A deployment should not treat one mechanism change
as a cure for budget consistency, conditional per-auction truthfulness, and
the platform's chosen ranking objective; these are distinct design axes.

\paragraph{Unaudited labels and synthetic demand.}
The context pool is entirely single-teacher labelled, with no human-gold rows
or completed human ratings.  The perturbation analysis brackets one material
error construction, but cannot estimate real precision, recall, taxonomy
confusion, or safety error.  The 12-vertical, three-bidder pool is synthetic and
sparser than a production market.  These limitations restrict the numerical
curve, even though the replicated-state mechanism does not depend on the
teacher's semantic correctness.

\paragraph{Session, arrival, and scale assumptions.}
Each draw is a fresh session, so the product's within-session three-impression
cap never binds.  Results cannot be extrapolated to repeated-session inventory
without session identifiers and a cap-state model.  The primary curve uses
homogeneous independent Poisson streams and independently sampled contexts.
The added robustness arm covers one declared Markov-modulated burst process and
one fixed high/low device mixture, but not diurnal cycles, trace-linked device
contexts, network failures, or strategic timing.  Both designs fix \(N=50\),
100 ticks, and a baseline 3,000 arrivals per tick.

\paragraph{Value, bidding, and theorem scope.}
All delivery runs assume truthful reports even though the deployed payment
rule makes truth-telling suboptimal.  We do not solve the resulting repeated
strategic game: bids affect current spend, which changes later pacing rates.
The deterministic and log-normal value arms are proxy
landscapes, and the latter imports only one iPinYou market-price mean with an
author-chosen tail shape.  Because the log-normal is untruncated, it has no
finite conditional charge cap and lies outside Theorem~1.  All simulator values,
budgets, and debits are dimensionless score units despite legacy internal field
names.  Integer rounding is a chosen accounting granularity: the visible-budget
guard and small-lag crossing residuals are unit-sensitive, and we withdraw any
claim that they transfer to a real currency ledger.

\paragraph{Broader impacts.}
Auditing stale shared constraints can clarify accountability in
privacy-preserving ML systems, but this score-unit study does not quantify real
advertiser charges.  Conversely,
better budget utilisation can make behavioural advertising more effective;
the underlying teacher labels may encode demographic or linguistic bias, and
on-device execution alone does not make targeting fair or harmless.  We
therefore present an accounting and mechanism audit, not a recommendation to
expand personalised advertising.

\section{Conclusion}
\label{sec:conclusion}

When ML-mediated economic decisions move onto clients, a shared budget becomes
an information constraint.  Pressure and demand-robustness sweeps retain the
lag direction; the visible-budget guard isolates local crossing from debits
hidden on other devices at the chosen granularity.  Pacing cannot repair unseen state.

Charging the runner-up's paced score also makes the per-auction rule
non-truthful.  Conditional critical-base-bid payment repairs that incentive but
leaves paced ranking, dynamic strategy, and distributed accounting untouched.
The direction survives our label and demand stress tests; magnitudes remain
conditional on fresh sessions, truthful play, synthetic campaigns, unaudited
labels, and dimensionless accounting, and do not estimate a currency ledger.

\section{Future work}

Priorities are an independent human label audit, sessionised logs for the
three-impression cap, and trace-linked device demand.  Mechanism work should
compare faster sync, escrow-style device reservations and budget transfer,
measuring excess debit against stranded inventory; an explicit
arrival-to-device model could yield an $N$-dependent bound exploiting local
self-knowledge.  Finally, the repeated reporting--pacing game must be solved or
simulated before truthful-report delivery is treated as an equilibrium
prediction.

\printbibliography[title={References}]

\appendix

\section{Label-perturbation stress test}
\label{app:labels}

Because no audit can estimate semantic error, this is a constructed sensitivity
test rather than a confidence claim about the teacher.  With a fixed label
seed, we mark 10\% of the 5,245 originally servable records as false positives
(525 records), make 1\% of the full corpus false negatives using complete label
tuples from the empirical servable mix (508 records), and assign a wrong
vertical to a disjoint 10\% of originally servable records (525 records).  The
1,558 changes leave 5,228 records servable.  We rebuild eligibility, recalibrate
on the same 30 disjoint pilot seeds, freeze the new budget, and evaluate the
same 30 held-out seeds.

\begin{table}[h]
\centering\small
\caption{Label-perturbation sensitivity for deterministic Even overspend
(\%).  Entries are mean $\pm$ two-sided 95\% Student-$t$ CI half-width.}
\label{tab:labels}
\begin{tabular}{rrr}
\toprule
$\Delta$&original&perturbed\\
\midrule
0&$0.5276\pm0.0347$&$0.5367\pm0.0414$\\
1&$17.7731\pm0.8702$&$18.3370\pm1.2525$\\
2&$50.4252\pm1.4789$&$49.6566\pm1.0992$\\
5&$189.9383\pm1.8070$&$190.3281\pm1.7013$\\
10&$479.2772\pm2.5343$&$480.3457\pm2.2025$\\
25&$1346.1324\pm3.7170$&$1347.0086\pm3.7100$\\
50&$1669.3075\pm3.8743$&$1667.7778\pm3.9587$\\
\bottomrule
\end{tabular}
\end{table}

All seven paired changed-minus-original intervals include zero.  Every paired
lag-minus-zero CI is positive, and all 30 perturbed replicate curves are
strictly increasing.  At one tick the perturbed lag-minus-zero effect is 17.80
points [16.57,19.03]; at 50 ticks it is 1,667.24
[1,663.29,1,671.20].  This supports robustness to the stated construction, not
semantic validity.

\section{Reproducibility, artifacts, and compute}
\label{app:reproducibility}

The supplemental artifact is available at
\url{https://github.com/sarkar-dipankar/on-device-auction-audit}.
It contains Rust and Python producers,
evaluation context vectors, frozen budget vectors, every pilot-calibration row,
every evaluation replicate, manifests, input/source hashes, and deterministic
source snapshots.  The primary experiments contain 1,800 raw rows; the new
pressure and visible-balance artifact adds 1,050 rows.  Its 20-times rows match
all 210 corresponding primary Even rows exactly.  The bursty/heterogeneous
artifact adds 210 rows.  The auction parity suite executes the live TypeScript
implementation, including a fractional-score test vector, against the Rust port.

The simulator is run from the repository root with the stable Rust toolchain;
derived statistics and figures use the checked-in Python environment with byte
code disabled.  Manifests record exact commands, seeds, toolchain versions,
hashes, and schema descriptions.  On an AMD Ryzen 7 5700U CPU (8 physical/16
logical cores) with 62 GiB RAM, the new 1,050-row CPU-only sweep used 16
workers and took 53.5 seconds wall time; the 210-row robustness sweep took 5.8
seconds.  Each raw row also records its cell time.  The new manifests record the
producer, all behaviour-relevant shared-source hashes, repository state, exact
Rust/Cargo/Node versions, and deterministic source-snapshot hash.  Wall times
for the earlier primary, factorial, and label runs were not recorded, so total
project compute cannot be reconstructed exactly.

\section*{NeurIPS Paper Checklist}

\begin{enumerate}[leftmargin=*,itemsep=0.8em]

\item {\bf Claims}
\item[] Question: Do the main claims made in the abstract and introduction accurately reflect the paper's contributions and scope?
\item[] Answer: \answerYes{}
\item[] Justification: The abstract and Section~\ref{sec:introduction} separate the information and incentive claims; Sections~\ref{sec:results} and~\ref{sec:analysis} state their empirical scope and limitations.

\item {\bf Limitations}
\item[] Question: Does the paper discuss the limitations of the work performed by the authors?
\item[] Answer: \answerYes{}
\item[] Justification: Section~\ref{sec:analysis} discusses unaudited single-teacher labels, synthetic demand and values, fresh sessions, the primary homogeneous and declared bursty/heterogeneous arrivals, dynamic strategy, theorem scope, and broader impacts.

\item {\bf Theory assumptions and proofs}
\item[] Question: For each theoretical result, does the paper provide the full set of assumptions and a complete (and correct) proof?
\item[] Answer: \answerYes{}
\item[] Justification: Section~\ref{sec:theory} states the deterministic-budget, Poisson-arrival, and conditional-charge-cap assumptions and gives the proof of Theorem~1 and the threshold proof of Proposition~2.

\item {\bf Experimental result reproducibility}
\item[] Question: Does the paper fully disclose all the information needed to reproduce the main experimental results of the paper to the extent that it affects the main claims and/or conclusions of the paper (regardless of whether the code and data are provided or not)?
\item[] Answer: \answerYes{}
\item[] Justification: Section~\ref{sec:setup} and Appendix~\ref{app:reproducibility} specify the assignment process, mechanisms, values, budgets, disjoint seed blocks, statistical method, and artifact contents.

\item {\bf Open access to data and code}
\item[] Question: Does the paper provide open access to the data and code, with sufficient instructions to faithfully reproduce the main experimental results, as described in supplemental material?
\item[] Answer: \answerYes{}
\item[] Justification: The supplemental archive contains the Rust and Python producers, context-vector input, live TypeScript parity tests, frozen budgets, raw pilot and evaluation rows, manifests, hashes, and run instructions (Appendix~\ref{app:reproducibility}).

\item {\bf Experimental setting/details}
\item[] Question: Does the paper specify all the training and test details (e.g., data splits, hyperparameters, how they were chosen, type of optimizer) necessary to understand the results?
\item[] Answer: \answerYes{}
\item[] Justification: There is no model training in this paper.  Section~\ref{sec:setup} specifies all simulation parameters, controller gains, value distributions, calibration/evaluation separation, and the fresh-session decision.

\item {\bf Experiment statistical significance}
\item[] Question: Does the paper report error bars suitably and correctly defined or other appropriate information about the statistical significance of the experiments?
\item[] Answer: \answerYes{}
\item[] Justification: Every headline cell uses 30 demand paths; Section~\ref{sec:setup}, tables, and figure captions define two-sided 95\% Student-$t$ intervals, with seed-paired intervals for contrasts.

\item {\bf Experiments compute resources}
\item[] Question: For each experiment, does the paper provide sufficient information on the computer resources (type of compute workers, memory, time of execution) needed to reproduce the experiments?
\item[] Answer: \answerNo{}
\item[] Justification: Appendix~\ref{app:reproducibility} reports CPU, memory, worker count, per-cell timing, 53.5-second wall time for the 1,050-row sweep, and 5.8 seconds for the 210-row robustness sweep.  Earlier runs did not record wall time, so full project compute is not reconstructible.

\item {\bf Code of ethics}
\item[] Question: Does the research conducted in the paper conform, in every respect, with the NeurIPS Code of Ethics \url{https://neurips.cc/public/EthicsGuidelines}?
\item[] Answer: \answerYes{}
\item[] Justification: The work is an offline simulation over a licensed, consent-documented derivative; it makes no deployment or human-audit claim and explicitly discusses data and advertising harms.

\item {\bf Broader impacts}
\item[] Question: Does the paper discuss both potential positive societal impacts and negative societal impacts of the work performed?
\item[] Answer: \answerYes{}
\item[] Justification: The broader-impacts paragraph in Section~\ref{sec:analysis} covers reduced unintended charges and accountability as well as more effective advertising and label-bias risks.

\item {\bf Safeguards}
\item[] Question: Does the paper describe safeguards that have been put in place for responsible release of data or models that have a high risk for misuse (e.g., pre-trained language models, image generators, or scraped datasets)?
\item[] Answer: \answerNA{}
\item[] Justification: This paper releases no trained or generative model and no raw conversations; the simulation input contains derived context vectors.  The source dataset's consent, redaction, access, and license conditions remain applicable.

\item {\bf Licenses for existing assets}
\item[] Question: Are the creators or original owners of assets (e.g., code, data, models), used in the paper, properly credited and are the license and terms of use explicitly mentioned and properly respected?
\item[] Answer: \answerYes{}
\item[] Justification: Section~\ref{sec:setup} cites WildChat and states the derivative's CC BY-NC 4.0 license; it cites the Apache-2.0 \texttt{gpt-oss:120b} model card.  The supplemental cards preserve provenance and terms.

\item {\bf New assets}
\item[] Question: Are new assets introduced in the paper well documented and is the documentation provided alongside the assets?
\item[] Answer: \answerYes{}
\item[] Justification: The supplemental archive documents code, schemas, dataset/model provenance, manifests, frozen budgets, raw rows, toolchains, limitations, and licenses; Appendix~\ref{app:reproducibility} summarises these materials.

\item {\bf Crowdsourcing and research with human subjects}
\item[] Question: For crowdsourcing experiments and research with human subjects, does the paper include the full text of instructions given to participants and screenshots, if applicable, as well as details about compensation (if any)?
\item[] Answer: \answerNA{}
\item[] Justification: This paper conducts no crowdsourcing and recruits or interacts with no participants.  The deposited but uncompleted human-evaluation stimulus is not treated as a study or evidence.

\item {\bf Institutional review board (IRB) approvals or equivalent for research with human subjects}
\item[] Question: Does the paper describe potential risks incurred by study participants, whether such risks were disclosed to the subjects, and whether Institutional Review Board (IRB) approvals (or an equivalent approval/review based on the requirements of your country or institution) were obtained?
\item[] Answer: \answerNA{}
\item[] Justification: No human-subject experiment was conducted.  The work uses a previously released, consent-documented and redacted dataset derivative and does not access participant identities.

\item {\bf Declaration of LLM usage}
\item[] Question: Does the paper describe the usage of LLMs if it is an important, original, or non-standard component of the core methods in this research? Note that if the LLM is used only for writing, editing, or formatting purposes and does \emph{not} impact the core methodology, scientific rigor, or originality of the research, declaration is not required.
\item[] Answer: \answerYes{}
\item[] Justification: Section~\ref{sec:setup} identifies the single \texttt{gpt-oss:120b} teacher, states that all 50,808 labels are unaudited and none are human-gold, and Appendix~\ref{app:labels} reports the perturbation stress test without presenting it as a human audit.

\end{enumerate}

\end{document}